\renewcommand\paragraph{\@startsection{paragraph}{4}{\z@}%
                        {-12\p@ \@plus -4\p@ \@minus -4\p@}%
                        {-0.5em \@plus -0.22em \@minus -0.1em}%
                        {\normalfont\normalsize\bfseries}}
\let\llncssubparagraph\subparagraph
\let\subparagraph\paragraph
\let\subparagraph\llncssubparagraph
\newcommand{\toolname}{{\sc Discriminer}\xspace}
\newcommand{\set}[1]{\left\{ #1 \right\}}
\newcommand{\seq}[1]{\langle #1 \rangle}
\newcommand{\TDLP}{TDLP}
\newcommand{\dis}{d}
\newcommand{\LSBZ}{\texttt{LSB0}}
\newcommand{\MSBZ}{\texttt{MSB0}}
\newcommand{\Pat}{\texttt{Pat}}
\def\rmdef{\stackrel{\mbox{\rm {\tiny def}}}{=}} 
\newcommand{\DIST}{{\cal D}}
\tikzset{
  treenode/.style = {shape=rectangle, rounded corners,
                     draw, align=center,
                     top color=white, bottom color=blue!20},
  root/.style     = {treenode, font=\Large, bottom color=red!30},
  env/.style      = {treenode, font=\ttfamily\normalsize},
  dummy/.style    = {circle,draw, font=\ttfamily\normalsize}
}
\newcommand\tupleof[1]{\left\langle #1 \right\rangle}
\definecolor{pblue}{rgb}{0.13,0.13,1}
\definecolor{pgreen}{rgb}{0,0.5,0}
\definecolor{pred}{rgb}{0.9,0,0}
\definecolor{pgrey}{rgb}{0.46,0.45,0.48}
\newcommand\lab{\mbox{\textsc{Label}}}
\begin{document}

\title{Discriminating Traces with Time\thanks{This research was
    supported by DARPA under agreement FA8750-15-2-0096.}}
\author{
  Saeid Tizpaz-Niari
  \and
  Pavol {\v C}ern\'y
  \and
  Bor-Yuh Evan Chang\and \\
  Sriram Sankaranarayanan
  \and
  Ashutosh Trivedi
}
\institute{University of Colorado Boulder, USA \\
  \email{\{saeid.tizpazniari,
    pavol.cerny, evan.chang, \\srirams, ashutosh.trivedi\}@colorado.edu}}

\maketitle

\begin{abstract}
  What properties about the internals of a program explain the
  possible differences in 
  its overall running time for different inputs? In this paper, we
  propose a formal framework for 
  considering this question we dub \emph{trace-set discrimination}. 
  We show that even though the algorithmic problem of computing
  maximum likelihood 
  discriminants is NP-hard, approaches based on
  integer linear programming (ILP) and decision tree learning can be useful in
  zeroing-in on the program internals. On a set of Java benchmarks,
  we find that compactly-represented decision trees scalably discriminate with
  high accuracy---more scalably than maximum likelihood discriminants and with
  comparable accuracy. We demonstrate on three larger
  case studies how decision-tree discriminants produced by our tool are
  useful for debugging timing 
  side-channel vulnerabilities (i.e., where a malicious observer
  infers secrets simply from passively watching execution times) and
  availability vulnerabilities.
\end{abstract}

\section{Introduction}
\label{sec:introduction}


Different control-flow paths in a program can have varying execution times. 
Such observable differences in execution times may be explainable by
information about the 
program internals, such as whether or not a given function or
functions were called. 
How can a software developer (or security analyst)
determine what internals may or may not explain the varying
execution times of the program? In this paper, we consider the problem
of helping developers and analysts to identify such explanations. 

We identify a core problem for this task---the {\em trace-set discrimination} problem.
Given a set of execution traces with observable execution times binned (or
clustered) into a finite set of labels, a \emph{discriminant} (or classifier) is
a map relating each label to a property (i.e., a Boolean formula) satisfied by
the traces assigned to that label. Such a discriminant model can then be used,
for example, to predict a property satisfied by some trace given the timing
label of that trace.


This problem is, while related, different than the profiling problem. In
performance profiling, the question is given an execution trace, how do the
various parts of the program contribute to the overall execution time?
The trace-set discrimination problem, in 
contrast, looks for distinguishing features among multiple traces that result in
varying execution times.

Crucially, once we can explain the timing
differences in terms of properties of traces (e.g., what functions are
called only in traces with long execution time), the analyst can use the explanation to diagnose the possible
timing side-channel and potentially find a fix for the vulnerability. 
Section~\ref{sec:overview} shows on an example how a security analyst might
use the tool for debugging information leaks. 


In this paper, we consider the discriminating properties of traces to be Boolean
combinations of a given set of atomic predicates. These atomic predicates
correspond to actions that can be observed through instrumentation in a training
set of execution traces. Examples of such predicates are as follows:
\begin{inparaenum}[(1)]%
\item Does the trace have a call to the function $f$ in the program?
\item Does the trace have a call to the \texttt{sort} function with an array of
more than a $1000$ numbers?
\end{inparaenum}
In our case study, we consider atomic predicates corresponding to the number
of times each function is called.

Concretely, our overall approach is to first obtain a set of execution traces
with information recorded to determine the satisfiability of the given atomic
predicates along with corresponding execution times. Then, we cluster these
training traces based on their overall execution times to bin them into timing
labels. Finally, we learn a trace-set discriminant model from these traces
(using various techniques) to capture what is common amongst the traces with the
same timing labels and what is different between traces with different labels.


In particular, we make the following contributions:
\begin{itemize}\itemsep 0pt
\item We formalize the problem of \emph{trace-set discrimination} with timing
differences and show that the algorithmic problem of finding the maximum
likelihood conjunctive discriminant is NP-hard (Section~\ref{sec:problem}).
\item We describe two methods for learning trace-set discriminants:%
  \begin{inparaenum}[(1)]
    \item a direct method for inferring the maximum likelihood conjunctive
    discriminant using an encoding into integer linear programming (ILP) and
    \item by applying decision tree learning
  \end{inparaenum}
that each offer different trade-offs (Section~\ref{sec:mining}). For instance,
decision tree algorithms are designed to tolerate noisy labels and work
effectively on large data sets but do not have formal guarantees. On a set of
microbenchmarks, we find that the methods have similar accuracy but decision
tree learning appears more scalable.
\item We present three case studies in identifying and debugging timing
side-channel and availability vulnerabilities, armed with a prototype tool \toolname{} that
performs label clustering and decision tree-discriminant learning
(Section~\ref{sec:experimental}). These case studies were conducted on
medium-sized Java applications, which range in size from approximately 300 to
3,000 methods and were developed by a third party vendor as challenge problems
for identifying and debugging such side-channel vulnerabilities. We show that
the decision trees produced by \toolname{} are useful for explaining the timing
differences amongst trace sets and performing this debugging task.
\end{itemize}


In our approach, we need to execute both an
instrumented and an uninstrumented version of the program of interest on
the same inputs. This is because a trace of
the instrumented program is needed to determine the satisfiability of the atomic
predicates, while the execution time of interest is for the uninstrumented
program. Therefore we need to assume that the program is deterministic. 
Since timing observations are noisy due to many sources of
non-determinism, each trace is associated with a \emph{distribution} over the
labels. For instance, a trace may have a label $\ell_1$ with probability $0.9$
and label $\ell_2$ with probability $0.1$.

Like with profiling, we also assume the test inputs that drive the program of
interest to expose interesting behavior are given. It is a separate problem to
get such interesting inputs: whether the analyst has logged some suspicious
inputs from a deployment or whether the developer generates tests using random
or directed test-case generation.

\section{Timing Side-Channel Debugging with \toolname}
\label{sec:overview}
In this section, we demonstrate by example how \toolname{} can be useful in
identifying timing side-channel vulnerabilities
and suggesting ways to fix them.
We use an application called
SnapBuddy\footnote{From DARPA STAC
({\scriptsize\url{www.darpa.mil/program/space-time-analysis-for-cybersecurity}}).}
as an example.
SnapBuddy
is a Java application with 3,071 methods, implementing a mock social network in which
each user has their own page with a photograph.

\subsubsection*{Identifying a Timing Side-Channel with Clustering.}

The analyst interacts with the application by issuing download
requests to the pages of various users to record execution times.
Figure~\ref{fig:sbtime} shows a scatter plot of the running times of
various traces with each trace represented by a point in the figure.  The
running times are clustered into $6$ different clusters using a
standard $k$-means clustering algorithm and shown using different
colors. We see that for some users, the download times were roughly $15$ seconds, whereas for
some others they were roughly $7.5$ seconds. This significant time differential suggests a
potential timing side-channel if the difference can be correlated with sensitive
program state and thus this differential should be investigated further with
\toolname{}.
To see how such a time differential could be a timing side-channel,
let us consider an attacker that
(a) downloads the public profile pages of all users and learns each download time,
and (b) can observe timing between packets by sniffing the
network traffic between legitimate users.
If the attacker observes user Alice downloading the page of another user whose identity
is supposed to be a secret
and sees that the download took approximately $7.5$ seconds, the attacker can infer
that Alice downloaded the page of one of the six users corresponding
to the six squares (with time close to 7.5 seconds) in Figure~\ref{fig:sbtime}. The timing
information leak thus helped the attacker narrow down the
possibilities from hundreds of users to six.

\subsubsection*{Debugging Timing Side-Channels with Decision Tree Learning.}

How can the analyst go about debugging the SnapBuddy application to eliminate
this timing side-channel?
We show how \toolname can help. Recall that the analyst downloaded pages
of all the users. Now the same download queries are executed over an
instrumented version of the SnapBuddy server to record the number of
times each method in the application is called by the trace. As a
result, we obtain a
set of traces with their (uninstrumented) overall running times and set
of corresponding method calls.

\begin{figure}[t]
\centering
\begin{minipage}[b]{0.47\textwidth}
  \centering
  \includegraphics[width=\textwidth]{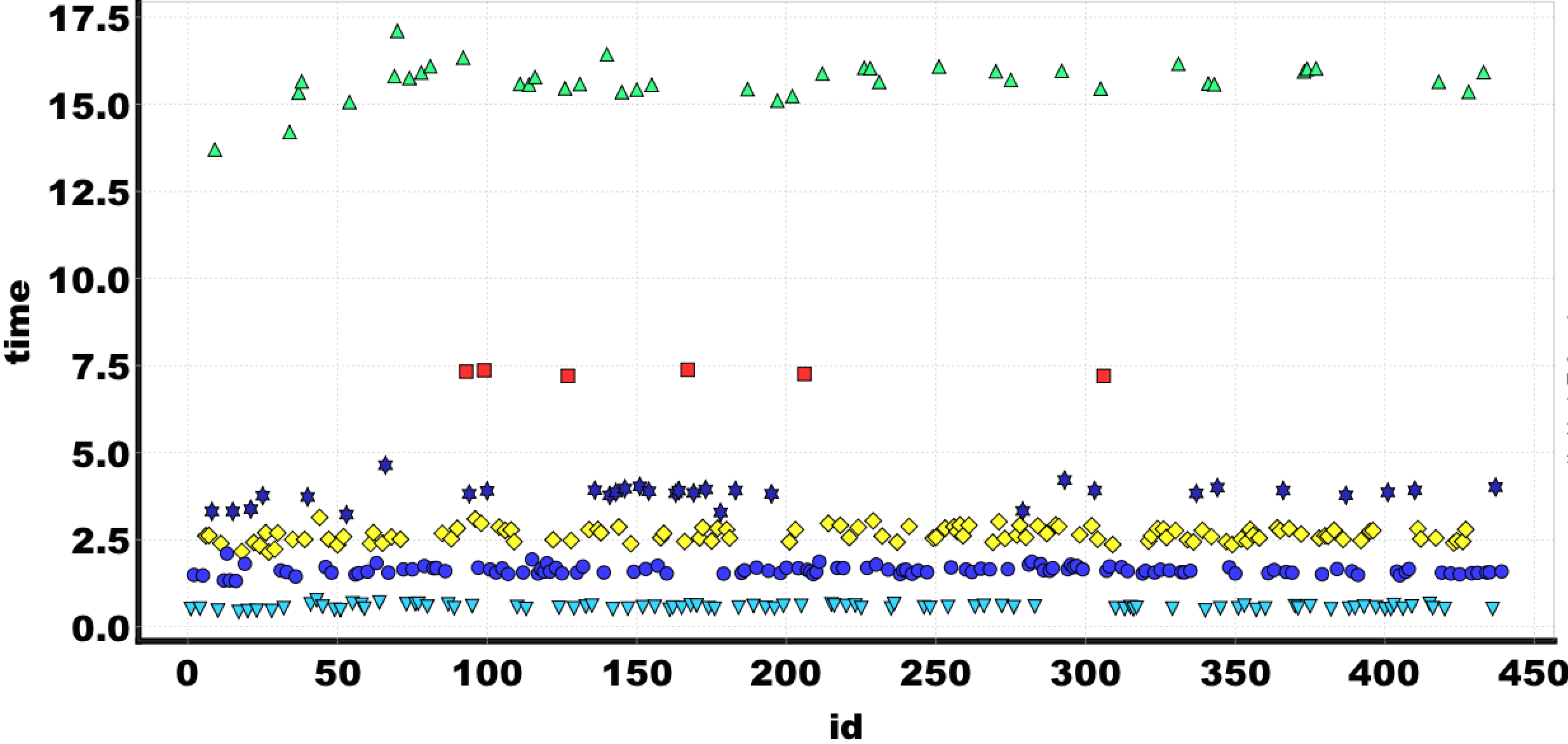}
  \caption{Cluster running times from the SnapBuddy to produce labels. The scatter plot shows a differential corresponding to a possible timing side-channel.}
  \label{fig:sbtime}
\end{minipage}\hfill
\begin{minipage}[b]{0.50\textwidth}
  \centering
  \includegraphics[width=\textwidth]{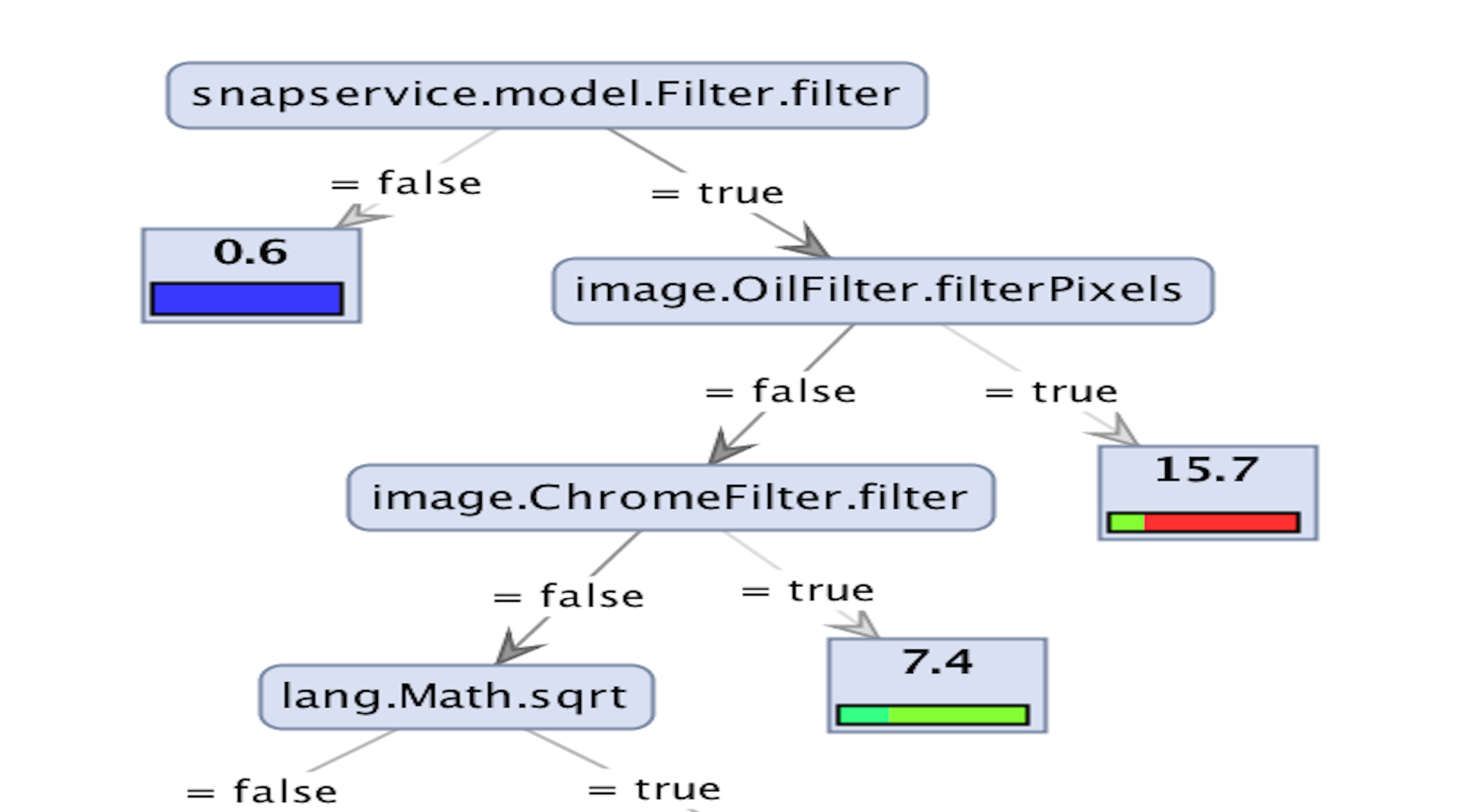}
  \caption{Snippet of a decision-tree discriminant learned from SnapBuddy traces using the timing labels from Figure~\ref{fig:sbtime}.}
  \label{fig:sbdectree}
\end{minipage}
\vspace{-2em}
\end{figure}
Then \toolname uses the
standard \emph{CART} decision tree learning
algorithm~\cite{BFOS84} to infer a decision tree that
succinctly represents
a discriminant
using atomic predicates that characterize
whether or not the trace invoked a particular method (shown in Figure~\ref{fig:sbdectree}).
For instance, the cluster representing the longest running time (around $15$ seconds) is
discriminated by the property ${\small\texttt{snapservice.model.Filter.filter}}
\wedge {\small\texttt{image.OilFilter.filterPixels}}$, indicating that the two methods
are both invoked by the trace. Likewise, the cluster representing the
running time around $7.5$ seconds is discriminated by the
property ${\small\texttt{snapservice.model.Filter.filter}}
\wedge \lnot{\small\texttt{image.OilFilter.filterPixels}}
\wedge {\small\texttt{image.ChromeFilter.filter}}$, indicating that
{\small\texttt{image.OilFilter.filterPixels}} must not be invoked
while the other two must be.

The analyst might now suspect what is going on: the timing
differences are caused by the filters that each user chooses to apply
to their picture. 
Note that the analyst running \toolname did not need to
know that the filters are important for causing this time differential, or even that
they existed. The tool discovers them simply because the trace
contains all method calls, and the decision tree learning algorithm
produces a useful discriminant.

A possible fix now suggests itself: make sure that the execution of
each type of filter takes the same amount of time (though of course an
implementation of such a fix still requires development effort). 
Overall, the example demonstrates how the decision tree produced by
\toolname{} can be used to debug (and potentially fix) side-channel
vulnerabilities.

\section{Trace-Set Discrimination Problem}
\label{sec:problem}
 A \emph{discrete probability distribution}, or just distribution,
over a finite set $L$ is a function $\dis : L {\to} [0, 1]$ such that
$\sum_{\ell \in L} \dis(\ell) = 1$.  Let $\DIST(L)$ denote the set of all
discrete distributions over $L$.

Let $p_1,\ldots,p_m$ represent a set of \emph{atomic predicates} over
traces. Each predicate evaluates to a Boolean value over a given
trace.  Therefore, for simplicity, we represent a trace simply by the
truth valuations of the predicates over the trace. In addition to
atomic predicates, traces are associated with a distribution over
labels.
These distributions are generated by first measuring the execution time $t$ of
the trace. The execution
time is obtained as the average over some fixed number of measurements
$M> 0$.  Therefore, the timing is taken to be a Gaussian random
variable with mean $t$ and a standard deviation $\sigma_t$. Using this
information, we derive a discrete distribution $d \in \DIST(L)$ over
the set of labels in $L$. 

\begin{definition}[Traces, Predicates and Label Distributions]
An execution trace $T$ of the program is a tuple
$\tupleof{\tau,d}$ wherein
$\tau = \ \tupleof{\rho_1,\ldots,\rho_m}$ represents the truth
valuations to the predicates $p_1, \ldots, p_m$, respectively and
$d \in \DIST(L)$ is the associated label distribution over the finite
set of labels $L$.
\end{definition}

We define a \emph{trace discriminant} as a tuple of Boolean formulae that
predict the labels of the traces given the truth valuations in the following
fashion. 

\begin{definition} 
Given a set of labels $L = \set{ \ell_1, \ldots, \ell_K}$ and predicates
$P = \set{p_1, \ldots, p_m}$, a \textbf{discriminant} $\Psi$ is a tuple
$\tupleof{\varphi_1, \ldots, \varphi_K}$ of Boolean formulae where each formula
$\varphi_i$ is over the predicates in $P$ and corresponds to a label $\ell_i$.
\end{definition}
A trace $\tupleof{\tau, d}$ receives a label $\ell_k$ under trace discriminant
$\Psi = \tupleof{\varphi_1,\ldots,\varphi_K}$, and we write
$\lab(\tupleof{\tau, d}, \Psi) = \ell_k$, if $k$ is the
smallest index $1 \leq i \leq K$ such that $\tau \models
\varphi_i$, i.e. $\varphi_i$ evaluates to \texttt{true} for the truth valuation
$\tau$.
Formally, 
\[
\lab(\tupleof{\tau, d}, \Psi) =
\begin{cases}
  \ell_1 & \texttt{if } \tau \models \varphi_1, \texttt{ else}\\ 
  \ell_2 & \texttt{if } \tau \models \varphi_2, \texttt{ else}\\
  \vdots & \vdots \\ 
  \ell_K & \texttt{if } \tau \models \varphi_K.
  \end{cases}
\]

\begin{definition}
Given a set of predicates $\set{p_1, \ldots, p_m}$, set of labels
$\{ \ell_1,\ldots,\ell_K\}$, and a set of traces
 $\set{\tupleof{\tau_1,d_1},  \ldots, \tupleof{\tau_N, d_N}}$, the \textbf{trace set
discriminant problem} $(\TDLP{})$ is to learn a trace discriminant $\Psi
=\ \tupleof{\varphi_1,\ldots,\varphi_K}$. 
\end{definition}

In general, there are numerous  possible
discriminants that can be inferred for a given instance of
the \textsc{tdlp}. We consider two approaches in this
paper: (a) a \emph{formal} maximum likelihood
learning model over a structured set of discriminants and
(b) an informal decision tree learning approach to maximize accuracy
while minimizing the discriminant size.

\subsection{Maximum Likelihood Learning}
Given a discriminant and a set of traces, we define the
likelihood of the discriminant as the probability that each trace
$\tupleof{\tau_i,d_i}$ receives the label $\lab(\tupleof{\tau_i,d_i}, \Psi)$
dictated by the discriminant.  
\begin{definition}
The \textbf{likelihood} $\lambda(\Psi)$ of a discriminant $\Psi$ over a set of traces
$\set{\tupleof{\tau_1,d_1},  \ldots, \tupleof{\tau_N, d_N}}$ is given by
$\lambda(\Psi) = \ \mathop{\prod}_{i=1}^N d_i \left( \lab(\tupleof{\tau_i,d_i},
\Psi) \right)\,$.
\end{definition}

\noindent The \emph{maximum likelihood} discriminant $\Psi_{ml}$ is defined as the
discriminant amongst all possible Boolean formulae that maximizes
$\lambda(\Psi)$, i.e. 
$\Psi_{ml} =  \mbox{argmax}_{\Psi}\left( \lambda(\Psi) \right)$.
This maximization runs over the all possible tuples of $K$ Boolean formulae over $m$
atomic predicates, i.e, a space of $(K!)\binom{2^{2^m}}{K}$ possible
discriminants!
In particular, Hyafil and Rivest~\cite{hyafil1976constructing} show that the
problem of learning optimal decision trees is NP-hard.
Therefore, for our formal approach, we consider the following
simpler class of discriminants by restricting the form of the Boolean formulae
$\varphi_j$ that make up the discriminants to monotone conjunctive formulae.
\begin{definition}[Conjunctive Discriminants]\label{Def:conjunctive-discriminant} 
  A monotone \emph{conjunctive} formula over predicates $P = \set{p_1, \ldots, p_m}$
  is a finite conjunction of the form $\bigwedge_{j=1}^r p_{i_j}$ such that
  $1 \leq i_1, \ldots, i_r \leq m$.
  A discriminant $\Psi = \tupleof{\varphi_1,\ldots,\varphi_K}$ is a
  (monotone) conjunctive discriminant if each $\varphi_i$ is a monotone
  conjunctive formula for $1 \leq i \leq K$. 
  In  order to make a traces discriminant exhaustive, we assume $\varphi_K$ to
  be the formula $\texttt{true}$. 
\end{definition}
The number of conjunctive discriminants is $(K-1)! \binom{2^m}{K-1}$.
However, they can be easily represented and learned using SAT or ILP solvers, as shown
subsequently.
Moreover, working with simpler monotone conjunctive discriminants is preferable~\cite{Domingos1999}
in the presence of noisy data, as using formal maximum likelihood model to learn
arbitrary complex Boolean function would lead to over-fitting.
The problem of \emph{maximum likelihood} conjunctive discriminant
is then naturally defined.
We refine the result of~\cite{hyafil1976constructing} in our context to show
that the problem of learning (monotone) conjunctive discriminants is already
NP-hard.

\begin{theorem}\label{Theorem:npc-conj-discriminant}
  Given an instance of \textsc{tdlp}, the problem of finding the maximum likelihood
  conjunctive discriminant is \textsc{NP}-hard. 
 \end{theorem}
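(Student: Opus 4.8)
The plan is to first recast the optimization as a decision problem: since the labels and distributions are given, I ask whether there exists a conjunctive discriminant $\Psi$ with $\lambda(\Psi) \geq \theta$ for a suitable threshold $\theta$. Because $\lambda(\Psi) = \prod_i d_i(\lab(\langle\tau_i,d_i\rangle,\Psi))$ is a product, taking logarithms turns the objective into a sum over traces, so the likelihood decomposes as an additive reward in which each trace contributes $\log d_i(\cdot)$ according to which rule of the decision list first fires on it. I would then reduce from \emph{Minimum Set Cover}: given a universe $U=\{e_1,\dots,e_n\}$, sets $C_1,\dots,C_m \subseteq U$, and a bound $k$, decide whether $k$ of the sets cover $U$. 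This choice is natural because the accepting set of a monotone conjunction is an \emph{intersection}, and unions---hence covering---are recovered \emph{across} the rules of the decision list via the first-match ``else'' semantics; alternatively one can follow Hyafil--Rivest and reduce from Exact Cover by 3-Sets.

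For the construction I would use one predicate $p_j$ per set $C_j$ and set the number of labels to $K = k+1$, so that a discriminant supplies $k$ nondefault monotone conjunctions $\varphi_1,\dots,\varphi_k$ together with $\varphi_K = \texttt{true}$. For each element $e_i$ I create a trace $\tau^{e_i}$ in which $p_j$ is true exactly when $e_i \in C_j$, and I give it the label distribution that is uniform with value $c=(1-\varepsilon)/k$ on $\ell_1,\dots,\ell_k$ and value $\varepsilon$ on the default label $\ell_K$; thus $\tau^{e_i}$ is ``happy'' (contributes $c$) precisely when some nondefault rule fires on it and ``unhappy'' (contributes $\varepsilon < c$) when it falls through to $\ell_K$. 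The key observation is that $\varphi_r = \bigwedge_{j\in S_r} p_j$ fires on $\tau^{e_i}$ iff $e_i \in \bigcap_{j\in S_r} C_j$, so each nonempty rule accepts a subset of a single $C_j$; hence all $n$ element-traces can be made happy with $k$ rules iff $k$ of the sets cover $U$.

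The correctness argument then runs in both directions. A cover $C_{j_1},\dots,C_{j_k}$ yields the singleton rules $\varphi_r = p_{j_r}$ that catch every element, giving likelihood $c^n$ times the default factor, while if no $k$ sets cover $U$ then at least one element must fall through to $\ell_K$, replacing a factor $c$ by $\varepsilon$ and dropping the likelihood strictly below $\theta := c^n$ (times the default factor); placing the threshold at this value yields the equivalence.

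The part needing the most care---and the main obstacle---is ruling out \emph{degenerate} discriminants that would cover trivially: an empty conjunction $\varphi_r = \texttt{true}$ accepts every trace and would make one rule suffice regardless of the cover. To forbid this I would add a penalty trace $\tau^0$ on which every predicate is false and whose distribution concentrates mass $1-\varepsilon'$ on $\ell_K$; any nonempty monotone conjunction rejects $\tau^0$, whereas an empty rule captures it and pays a penalty. The delicate step is choosing $\varepsilon'$ small enough---with only polynomially many bits---that this penalty provably outweighs any likelihood a degenerate rule could gain on the element-traces, so that the yes/no instances stay separated by a single, polynomially representable threshold; amplifying the penalty with several copies of $\tau^0$ is an alternative. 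Finally I would confirm that multi-predicate conjunctions only shrink a rule's accepting set and thus never help achieve full coverage, completing the reduction.
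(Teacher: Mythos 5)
Your proof is correct, but it takes a genuinely different route from the paper's. The paper reduces from \emph{minimum-weight monotone SAT} using only \emph{two} labels: each clause $C_j$ becomes a trace (with $p_i$ true iff $x_i$ does not occur in $C_j$) whose distribution puts probability $0$ on $\ell_1$, so any conjunction whose induced assignment fails to satisfy some clause has likelihood exactly $0$ --- hard constraints are enforced by zero-probability mass rather than by a threshold --- while each variable $x_i$ becomes a trace with distribution $(1-\varepsilon,\varepsilon)$, so that among satisfying assignments the likelihood $\varepsilon^{k}(1-\varepsilon)^{m-k}$ is maximized exactly by a minimum-weight one. Your reduction instead encodes \emph{Set Cover} in the decision-list structure itself: with $K=k+1$ labels, the traces reaching nondefault labels form the union of the accepting sets of the $k$ monotone rules, and since a nonempty conjunction accepts only a subset of a single $C_j$ (intersections shrink accepting sets), all element traces can be made happy iff $k$ sets cover $U$. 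Your argument is sound, and the two points you flag are handled correctly: the uniform weight $c=(1-\varepsilon)/k$ (you need $\varepsilon<1/(k+1)$ so that $\varepsilon<c$) makes the identity of the catching rule irrelevant, and the all-false penalty trace neutralizes empty rules; the constant is less delicate than you fear, since a discriminant with an empty rule earns at most $c^{n}\varepsilon'$ against the target $c^{n}(1-\varepsilon')$, so any fixed $\varepsilon'<1/2$ separates the two cases with a polynomially representable threshold. The trade-off between the two proofs is this: yours locates the hardness in the combinatorics of covering with multiple rules, but it needs the number of labels $K$ to grow with the Set Cover bound $k$, so it leaves open hardness for fixed $K$; the paper's reduction is leaner (no threshold, no degeneracy gadget --- an empty conjunction sends every clause trace to $\ell_1$ and already has likelihood $0$) and establishes the stronger fact that the problem is NP-hard even for $K=2$, which is precisely the case targeted by the ILP formulation of Section~\ref{sec:mining}.
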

\begin{proof}
  We prove the NP-hardness of the problem of finding maximum likelihood
  conjunctive discriminant by giving a reduction from the \emph{minimum weight
    monotone SAT problem} that is already known to be NP-hard.
    Recall that a monotone Boolean formula is propositional logic formula 
  where all the literals are positive.
  Given a monotone instance of SAT $\phi = \bigwedge_{j = 1}^{n} C_j$ over
  the set of variable $X = \set{x_1, \ldots, x_m}$, the minimum weight
  monotone SAT problem is to find a truth assignment satisfying $\phi$ with as
  few variables set to $\texttt{true}$ as possible.    
  
  Consider the trace-set discrimination problem $P_\phi$ where there is one
  predicate $p_i$ per variable $x_i \in X$ of $\phi$, two labels $\ell_1$ and
  $\ell_2$, and the set of traces such that
  \begin{itemize}
  \item
    there is one trace $\seq{\tau_j, d_j}$ per clause $C_j$ of $\phi$  such that
    predicate $p_i$ evaluates to true in the trace $\tau_j$ if variable $x_i$
    \emph{does not occur} in clause $C_j$, and the label distribution $d_j$ is
    such that $d_j(\ell_1) = 0$ and $d_j(\ell_2) = 1$.
  \item
    there is one trace $\seq{\tau^i, d^i}$ per variable $x_i$ of $\phi$
    such that only the predicate $p_i$ evaluates to false in the trace $\tau^i$
    and the label distribution $d^i$ is such that $d^i(\ell_1) = 1-\varepsilon$
    and $d^i(\ell_2) = \varepsilon$ where $0 < \varepsilon < \frac{1}{2}$.
  \end{itemize}
  Observe that for every truth assignment $(x_1^*, \ldots, x_m^*)$ to variables
  in $X$, there is a 
  conjunctive discriminant $\wedge_{x_i^*=1} p_i$ such that if the clause $C_j$ is
  satisfied then the trace $\seq{\tau_j, d_j}$ receives the label $\ell_2$.
  This implies that the likelihood of the discriminant is non-zero only for the
  discriminant corresponding to satisfying valuations of $\phi$. 
  Moreover, for every variable $x_i$ receiving a true assignment, the trace
  $\seq{\tau^i, d^i}$ receives the label $\ell_2$ with $\varepsilon$ contributed
  to the likelihood term and
  for every variable $x_i$ receiving false assignment, the trace
  $\seq{\tau^i, d^i}$ receives the label $\ell_1$ with  $1-\varepsilon$ being
  contributed to the likelihood.
  This construction implies that a maximum likelihood discriminant should give
  label $\ell_2$ to all of the traces $\seq{\tau_j, d_j}$ and label $\ell_1$ to
  as many traces in $\set{\tau^i, d^i}$ as possible.
  It is easy to verify that there exists a truth assignment of size $k$
  for $\phi$ if and only if there exists a conjunctive discriminant
  in $P_\phi$ with likelihood $\prod_{i=1}^{k} \varepsilon \cdot \prod_{i=1}^{m-k} (1-\varepsilon)$.
\qed
\end{proof}

\subsection{Decision Tree Learning}
As noted earlier, the max likelihood approach over structured Boolean formulae can be
prohibitively expensive when the number of traces, predicates and labels are large.
An efficient alternative is to consider decision tree learning approaches that
can efficiently produce accurate discriminants while keeping the size of the
discriminant as small as possible.
The weighted accuracy of a discriminant $\Psi$ over traces
$\tupleof{\tau_i,d_i}, i=1,\ldots, N$ is defined additively as
$ \alpha(\Psi):\ \frac{1}{N} \sum_{i=1}^N d_i\left( \lab( \tupleof{\tau_i,d_i}, \Psi) \right)$.
This accuracy is a fraction between $[0,1]$ with higher accuracy
representing a better discriminant.

A decision tree learning algorithm seeks to learn a discriminant as a
decision tree over the predicates $p_1,\ldots,p_m$ and outcome labels
$\ell_1,\ldots,\ell_K$.  Typically, algorithms will maximize
$\alpha(\Psi)$ while keeping the description length $|\Psi|$ as small
as possible. A variety of efficient tree learning algorithms have
been defined including ID3~\cite{Quinlan/1986/ID3}, CART~\cite{Breiman/1984/CART},
CHAID~\cite{Kass80} and many others~\cite{TAN06,MRT12}. These
algorithms have been supported by popular machine learning
tools such as Scikit-learn python library
(\url{http://scikit-learn.org/stable/}) and RapidMiner~\cite{AKT12}.

\section{Discriminant Analysis}
\label{sec:mining}
In this section, we provide details of max likelihood and decision tree
approaches, and compare their performances over a scalable set of micro-benchmarks. 
\subsection{Maximum Likelihood Approach}
We now present an approach for inferring a conjunctive discriminant
$\Psi$ using integer linear programming (ILP) that maximizes the
likelihood $\lambda(\Psi)$ for given predicates $p_1,\ldots,p_m$,
labels $\ell_1,\ldots, \ell_K$ and traces $\tupleof{\tau_1,d_1}$,
$\ldots$, $\tupleof{\tau_N,d_N}$. This problem was already noted to
be NP-hard in Theorem~\ref{Theorem:npc-conj-discriminant}.

We first present our approach for the special case of $K=2$ labels.
Let $\ell_1, \ell_2$ be the two labels. Our goal is to learn a
conjunctive formula $\varphi_1$ for $\ell_1$.
We use binary decision variables $x_1,\ldots,x_m$ wherein $x_i = 1$ denotes that
$\varphi_1$ has the predicate $p_i$ as a conjunct, whereas $x_i = 0$ denotes that
$p_i$ is not a conjunct in $\varphi_1$. Also we add binary decision
variables $w_1,\ldots,w_N$ corresponding to each of the $N$ traces,
respectively.  The variable $w_i = 1$ denotes that the trace
$\tupleof{\tau_i,d_i}$ receives label $\ell_2$ under $\varphi_1$ and
$w_i= 0$ indicates that the trace receives label $\ell_1$. The likelihood of the
discriminant $\Psi$ can be given  as 
$ \lambda(\Psi) \rmdef \ \prod_{i=1}^N \left\{ \begin{array}{cc}
  d_i(\ell_1) & \mbox{if}\ w_i = 0 \\
  d_i(\ell_2) & \mbox{if}\ w_i = 1 \end{array}\right.\,.$
Rather than maximize $\lambda(\Psi)$, we equivalently maximize
$\log(\lambda(\Psi))$
\[
\log(\lambda(\Psi)) =
\sum_{i=1}^N \left\{ \begin{array}{cc}
  \log(d_i(\ell_1)) & \mbox{if}\ w_i = 0 \\
  \log(d_i(\ell_2)) & \mbox{if}\ w_i = 1
\end{array}\right.\,.
\]
Let $r_i := d_i(\ell_1) = 1 - d_i(\ell_2)$, and simplify the expression for
$\log(\lambda(\Psi))$ as $\sum_{i=1}^N (1-w_i) \log(r_i) + w_i \log(1-r_i)$.

Next, the constraints need to relate the values of $x_i$ to each $w_i$. Specifically,
let for each trace $\tupleof{\tau_i,d_i}$, $R_i \subseteq \{ p_1,\ldots,p_m\}$ denote
the predicates that are valued \emph{false} in the trace. We can verify that if $w_i = 0$, then
none of the predicates in $R_i$ can be part of $\varphi_1$, and if $w_i=1$, at least
one of the predicates in $R_i$ must be part of $\varphi_1$. This is expressed by the
following inequality $\frac{1}{|R_i|} ( \sum_{p_k \in R_i} x_k) \leq\ w_i\ \leq \sum_{p_k \in R_i}
x_k $.
If any of the $p_k \in R_i$ is included in the conjunction, then the LHS of the inequality
is at least $\frac{1}{|R_i|}$, forcing $w_i = 1$. Otherwise, if all $p_k$ are not included,
the RHS of the inequality is $0$, forcing $w_i = 0$.

\noindent The overall ILP is given by
\begin{eqnarray}\label{eq:lp-problem}
  \max & \sum_{i=1}^N (1-w_i) \log(r_i) + w_i \log(1-r_i) \nonumber \\
  \mathsf{s.t.} & \frac{1}{|R_i|} ( \sum_{p_k \in R_i} x_k) \leq\ w_i & i =1,\ldots,N \nonumber\\
  &  w_i\ \leq \sum_{p_k \in R_i} x_k & i = 1,\ldots,N \nonumber \\
  & x_j \in \{0,1\},\ w_i \in \{0,1\} & i = 1,\ldots, N,\ j = 1,\ldots, m
\end{eqnarray}

\begin{theorem}
  Let $x_1^*,\ldots,x_m^*$ denote the solution for ILP~\eqref{eq:lp-problem}
  over a given TDLP instance with labels $\{\ell_1,\ell_2\}$. 
  The discriminant $\Psi = \tupleof{\varphi_1, \texttt{true}}$ wherein
  $\varphi_1 = \ \bigwedge_{x_i^* = 1} p_i$ maximizes the likelihood
  $\lambda(\Psi)$ over all conjunctive discriminants. 
\end{theorem}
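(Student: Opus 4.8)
The plan is to establish that the feasible integer points of ILP~\eqref{eq:lp-problem} are in exact correspondence with the monotone conjunctive discriminants of the form $\tupleof{\varphi_1, \texttt{true}}$, and that under this correspondence the ILP objective equals the log-likelihood $\log(\lambda(\Psi))$. Since $\log$ is strictly increasing, maximizing the objective then maximizes $\lambda(\Psi)$ over all such discriminants, which is exactly the claim (recall that for $K = 2$ the component $\varphi_2 = \texttt{true}$ is fixed by Definition~\ref{Def:conjunctive-discriminant}).

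First I would fix the correspondence between the decision variables and the formulae: every vector $(x_1, \ldots, x_m) \in \{0,1\}^m$ determines $\varphi_1 = \bigwedge_{x_i = 1} p_i$, and conversely every monotone conjunction over $P$ arises this way, so the $x$-variables range exactly over the conjunctive candidates for $\ell_1$. The core step is then the constraint analysis showing that, for each trace $i$, the two inequalities together with integrality force $w_i$ to be the indicator of $\tau_i \not\models \varphi_1$. Since $R_i$ collects the predicates false in $\tau_i$, we have $\tau_i \models \varphi_1$ iff no conjunct of $\varphi_1$ lies in $R_i$, i.e.\ iff $\sum_{p_k \in R_i} x_k = 0$. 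Splitting on this sum: when it is $0$ the upper bound forces $w_i = 0$; when it is at least $1$ the lower bound is at least $\frac{1}{|R_i|} > 0$, forcing the binary variable $w_i = 1$. Hence $w_i = 0$ precisely when the trace receives $\ell_1$ and $w_i = 1$ precisely when it receives $\ell_2$, matching the definition of $\lab$.

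With $w_i$ pinned down, the objective $\sum_{i} (1 - w_i)\log(r_i) + w_i \log(1 - r_i)$ evaluates term-by-term to $\log(d_i(\ell_1))$ when trace $i$ gets $\ell_1$ and to $\log(d_i(\ell_2))$ when it gets $\ell_2$, i.e.\ to exactly $\log(\lambda(\Psi))$ for the induced $\Psi = \tupleof{\varphi_1, \texttt{true}}$. Therefore an optimal solution $x^*$ maximizes $\log(\lambda(\Psi))$, and by monotonicity of $\log$ it maximizes $\lambda(\Psi)$, yielding the stated maximum likelihood discriminant $\varphi_1 = \bigwedge_{x_i^* = 1} p_i$.

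I expect the only real subtleties to be boundary cases rather than a genuine obstacle: the degenerate trace with $R_i = \emptyset$, which satisfies every conjunction and hence always receives $\ell_1$ (handled by fixing $w_i = 0$ so that no division by zero occurs), and the treatment of $\log(r_i)$ when some $d_i(\ell_1) \in \{0,1\}$. In the latter case a zero probability sends both the likelihood factor and its logarithm to $0$ and $-\infty$ on both sides of the correspondence simultaneously, so the argument survives once we read $\log 0 = -\infty$ consistently; I would state this convention at the outset.
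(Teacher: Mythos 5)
Your proof is correct and takes essentially the same approach as the paper: the paper states this theorem without a separate proof environment, relying on exactly the derivation you reconstruct---the constraints pin each $w_i$ to the indicator of $\tau_i \not\models \varphi_1$ (via the case split on $\sum_{p_k \in R_i} x_k$ being zero or at least one), the objective then coincides with $\log(\lambda(\Psi))$, and monotonicity of $\log$ finishes the argument. Your extra attention to the $R_i = \emptyset$ and $\log 0$ boundary cases is care the paper does not spell out, but it only tightens, rather than changes, the same argument.
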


With the approach using the ILP in Eq.~\eqref{eq:lp-problem}, we can tackle an instance with $K > 2 $ labels by
recursively applying the two label solution. First, we learn a formula $\varphi_1$ for $\ell_1$
and $L \setminus \ell_1$. Next, we eliminate all traces that satisfy $\varphi_1$ and eliminate the label
$\ell_1$. We then recursively consider $\hat{L}:\ L \setminus \ell_1$ as the new label set. Doing so,
we obtain a discriminant $\Psi:\ \tupleof{\varphi_1, \varphi_2, \ldots,
  \varphi_{K-1}, \texttt{true}}$.

In theory, the ILP in~\eqref{eq:lp-problem} has $N+m$ variables, which
can be prohibitively large. However, for the problem instances
considered, we drastically reduced the problem size through standard
preprocessing/simplification steps that allowed us to resolve the
values of $x_i, w_j$ for many of the variables to constants.

\subsection{Decision Tree Learning Appraoch}
In order to discriminate traces, \toolname employs decision tree learning to
learn classifiers that discriminate the traces.
Given a set of $N$ traces on a dependent variable (labels)  $L$
that takes finitely-many values in the domain $\set{\ell_1, \ldots, \ell_K}$ and $m$
feature variables (predicates) $F = \set{f_1, \ldots, f_m}$, the goal of a
classification algorithm is to produce a partition the space of the feature variables
into $K$ disjoint sets $A_1, \ldots, A_K$ such that the predicted value of $L$
is $i$ if the $F$-variables take value in $A_i$. 
Decision-tree methods yield rectangular sets $A_i$  by recursively partitioning
the data set one $F$ variable at a time.
CART (\emph{Classification and Regression Trees}) is a popular and effective
algorithm to learn decision-tree based classifiers.
It constructs binary decision trees by iteratively exploring features and
thresholds that yield the largest information gain (Gini index) at each node.
For a detailed description of the CART, we refer to~\cite{BFOS84}. 

\subsection{Performance Evaluation}
We created a set of micro-benchmarks---containing a side-channel in
time---to evaluate the performance of the decision-tree discriminator
computed using \textit{scikit-learn} implementation of CART and
the maximum likelihood conjunctive discriminant using an ILP implementation from
the GLPK library.

These micro-benchmarks consist of a set of programs that take as an input a
sequence of binary digits (say a secret information), and perform some computation whose execution time
(enforced using \texttt{sleep} commands) depends on some property of
the secret information.
For the micro-benchmark series $\LSBZ$ and $\MSBZ$,
the execution time is a Gaussian-distributed random variable whose
mean is proportional to the position of least significant $0$ and most significant $0$
in the secret, respectively. In addition, we have a micro-benchmark series
$\Pat_d$ whose execution time is a random variable whose mean
depends upon the position of the pattern $d$ in the input.
For instance, the micro-benchmark $\Pat_{101}$ takes a $20$-bit input data
and the leftmost occurrence $i$ of the pattern $101$ executes three 
methods $F_i, F_{i+1}, F_{i+2}$  with mean exec. time of a method $F_j$
being $10{*}j$ ms. 

In our experiments with micro-benchmarks, we generate the dataset by randomly
generating the input.
For each input, we execute the benchmark programs $10$ times to approximate the
mean and the standard deviation of the observation, and log the list of method called
for each such input. 
For a given set of execution traces, we cluster the execution time based on
their mean and assign weighted labels to each trace according to Gaussian
distribution.
We defer the details of this data collection to Section~\ref{sec:experimental}.  
Our dataset consists of trace id, label, weight, and method calls for every
execution trace.
We use this common dataset to both the decision-tree and the
maximum likelihood algorithms.  
\begin{table}[t]
  \centering
  \caption{Micro-benchmark results for decision-tree discriminators learned
    using decision tree and the max-likelihood approach. 
    Legend: \textbf{\#M}: number of methods, \textbf{\#N:} number of traces,
    \textbf{T}:\ computation time in seconds, \textbf{A}: accuracy, 
    \textbf{H}: decision-tree height, \textbf{M}: max. discriminant size
    (Max. \# of conjuncts in discriminants), \textbf{$\epsilon < 0.1$ sec.}
   }
  \label{table4-1}
  \begin{tabular}{ || l | r | r || r | r | r || r | r | r ||}
    \hline
    &       &    & \multicolumn{3}{c||}{Decision Tree} & \multicolumn{3}{c||}{Max-Likelihood} \\
    \cline{4-9}
    ~~~~Benchmark ID~~~~~~ & \# \textbf{M} & \#\textbf{N} & \textbf{T} & \textbf{A} & \textbf{H} & \textbf{T} & \textbf{A} & $\textbf{M}$ \\ \hline
    \LSBZ & 10 & 188 & $\epsilon$ & 100\% & 7 & $\epsilon$ & 100 \% &  10 \\ \hline
    \MSBZ & 10 & 188 & $\epsilon$ & 100\% & 7 & $\epsilon$ & 100 \% &  10 \\ \hline
    $\Pat_{101}$ & 20 & 200 & $\epsilon$ & 100\% & 13 & 0.2 & 89.4\% &  20 \\ \hline
    $\Pat_{1010}$ & 50 & 500 & $\epsilon$ & 98.4\% & 22 & 1.3 & 93.6\%  & 50 \\ \hline
    $\Pat_{10111}$ & 80 & 800 & 0.1 & 97.8\% & 37 & 8.1 & 94.8\% &  72 \\ \hline
    $\Pat_{10101}$ & 100 & 1000 & 0.2 & 92.9\% & 43 & 9.8 & 87.9\% &  86 \\ \hline
    $\Pat_{10011}$ & 150 & 1500 & 0.5 & 89.2\% & 44 & 45.0 & 91.5\% & 118 \\ \hline
    $\Pat_{101011}$ & 200 & 2000 & 0.8 & 92.1\% & 50 & 60.2 & 90.9\%  & 156 \\ \hline
    $\Pat_{1010101}$ & 400 & 4000 & 4.2 & 88.6\% & 111 & 652.4 & 92.9\% & 294 \\ \hline
    \end{tabular}
\vspace{-2em}
\end{table}

Table~\ref{table4-1} shows the performance of the decision-tree
classifiers and the max-likelihood approach for given micro-benchmarks.
The table consists of benchmark scales (based on the number of
methods and traces), the accuracy of approaches, time of computing decision tree
and max-likelihood discriminant, the
height of decision tree, and the maximum number of 
conjuncts among all learned discriminants in the max-likelihood approach. 
In order to compute the performance of both models and avoid overfitting, we train
and test data sets using group $k$-fold cross-validation procedure with $k$
set to $20$.

Table~\ref{table4-1}
shows that both decision tree and max-likelihood approaches have decent
accuracy in small and  medium sized benchmarks. 
On the other hand, decision tree approach stands out as highly scalable: it
takes only $4.2$ seconds for the decision-tree approach to building a classifier for
the benchmark  $\Pat_{1010101}$ with $400$ methods and $4000$ traces, while it
takes $652.4$ seconds for the max-likelihood approach to constructing the
discriminants.  
Table~\ref{table4-1} shows that the discriminants learned using
decision tree approach are simpler than the ones learned using max-likelihood
approach requiring a fewer number of tests.

\section{Case Study: Understanding Traces with Decision Trees}
\label{sec:experimental}
The data on microbenchmarks suggest that the decision tree learning
approach is more scalable and has comparable accuracy as the
max-likelihood approach. Therefore, we consider 
three case studies to evaluate whether the decision tree
approach produces useful artifacts for debugging program vulnerabilities.

\vspace{-1em}
\paragraph{Research Question.} We consider the following question: 
\vspace{-1em}
\begin{framed}
 { Does the learned discriminant pinpoint code fragments that explain
  differences in the overall execution times? }
\end{framed}
\vspace{-1em}

We consider this question to be answered positively if we can identify an
explanation for timing differences (which can help debug to side channel or
availability vulnerabilities) through \toolname \footnote{https://github.com/cuplv/Discriminer}.

\vspace{-1em}
\paragraph{Methodology.}
We consider the discriminant analysis approach based on decision tree
learning from Section~\ref{sec:mining}. 
Table~\ref{tbl:discriminant-parameters} summarizes the particular
instantiations for the discriminant analysis that we consider here. 

\begin{table}[b]
\caption{Parameters for trace set discriminant analysis, which predicts a class
label based on attributes. Here, we wish to discriminate traces to predict the
total execution time of the trace based on the methods called in the trace
and the number of times each method is called. To consider a finite number of
class labels, we fix a priori $n$ possible time ranges based on choosing the
best number of clustering.   
}
\label{tbl:discriminant-parameters}
\begin{tabular*}{\linewidth}{@{\extracolsep{\fill}}>{\bfseries}lp{0.8\linewidth}} \toprule
attributes &
  (1) the methods called in the trace (Boolean)
\\
& (2) the number of times each method is called in a trace (integer)
\\[0.5ex]
class label &
  a time range for the total execution time of the trace
\\[0.5ex]
number of classes &
  6, 6, and 2 for SnapBuddy, GabFeed, and TextCrunchr
\\ \bottomrule
\end{tabular*}
\end{table}

\smallskip\textit{Attributes: Called Methods.}
For this case study, we are interested in seeing whether the key
methods that explain the differences in execution time can be pinpointed. Thus, we
consider attributes corresponding to the called methods in a trace. 
In order to collect information regarding the called methods,
we instrumented Java bytecode applications using Javassist analysis framework
(\url{http://jboss-javassist.github.io/javassist/}).  

\smallskip\textit{Class Label: Total Execution Time Ranges.}
To identify the most salient attributes, we fix a small number of possible
labels, and cluster traces according to total execution time. 
Each cluster is defined by a corresponding time interval. The clusters and
their intervals are learned using $k$-means clustering
algorithm. 

We consider the execution time for each trace to be a random variable and assume
a normal distribution.
We obtain the mean and variance through $10$ repeated measurements. We
apply clustering to the mean execution times of each trace to
determine the class labels. Henceforth, when we speak of the
execution time of a trace, we refer to the mean of the measurements for that trace.

A class label (or cluster) can be identified by the mean of
all execution times belonging to that cluster. Then, considering the
class labels sorted in increasing order, we define the lower boundary
of a bucket for classifying new traces by averaging the maximum
execution time in the previous bucket and the minimum execution time
in this bucket (and analogously for the upper boundary).

\smallskip\textit{Weighted Labeling of Traces.}
Given a set of time ranges (clusters), we define a weighted labeling of traces
that permits a trace to be assigned to different clusters with different weights.
For a given trace, the weights to clusters are determined by the
probability mass that belongs to the time range of the cluster.
For example, consider a sample trace whose execution-time distribution
straddles the boundary of two clusters $C_0$ and $C_1$, with $22\%$ area of the
distribution intersecting with cluster $C_0$ and $78\%$ with cluster  $C_1$.
In this case, we assign the trace to both clusters $C_0$ and $C_1$  with
weights according to their probability mass in their respective
regions.
Note that this provides a smoother interpretation of the class labels
rather than assigning the most likely label.

\smallskip\textit{Decision Tree Learning.}
From a training set with this weighted labeling, we apply the weighted
decision tree learning algorithm CART described in Sec.~\ref{sec:mining}.
We use \toolname both for clustering in the time domain as described
above to determine the class labels and weights of each trace and for
learning the classification model. We use group k-fold cross
validation procedure to find accuracy.

\begin{wraptable}{R}{0pt}
\begin{tabular}{@{\extracolsep{0.5em}}lrrr} \toprule
 & total & total & observed \\
program & methods & traces & methods \\
        & (num)   & (num)  & (num)
\\ \midrule
SnapBuddy & 3071 & 439 & 160
\\
GabFeed & 573 & 368 & 30
\\
TextCrunchr & 327 & 180 & 35
\\ \midrule
\bf total & 3971 & 987 & 225
\\ \bottomrule
\end{tabular}
\end{wraptable}
\paragraph{Objects of Study.}
We consider three programs drawn from benchmarks provided by
the DARPA STAC project.
These medium-sized Java programs were developed to be 
realistic applications that may potentially have timing side-channel or
availability security vulnerabilities. 
SnapBuddy is a web application for social image sharing. The profile
page of a user includes their picture (with a filter). 
The profile page is publicly accessible. 
GabFeed is a web application for hosting community forums.
Users and servers can mutually
authenticate using public-key infrastructure. 
TextCrunchr is a text analysis program capable of performing standard text
analysis including word frequency, word length, and so on.  
It uses sorting algorithms to perform the analysis.

In the inset table, we show the basic characteristics of these benchmarks. The
benchmarks, in total, consist of 3,971 methods. From these programs, we
generated 987 traces by using a component of each applications web API (scripted
via \texttt{curl}). 
In these recorded traces, we observed 225 distinct methods called. Note that
some methods are called thousands to millions of times. 

\begin{figure}[tbp]
\begin{tabular*}{\linewidth}{@{\extracolsep{\fill}}cc}
\\
\subfloat[Time (s) of each trace]{\label{fig:snapbuddy-scatter}
\includegraphics[width=0.49\linewidth]{SnapBuddy_scatter_2.png}
}
&
\subfloat[Decision tree accuracy: 99.5$\%$]{\label{fig:snapbuddy-tree}
\includegraphics[width=0.49\linewidth]{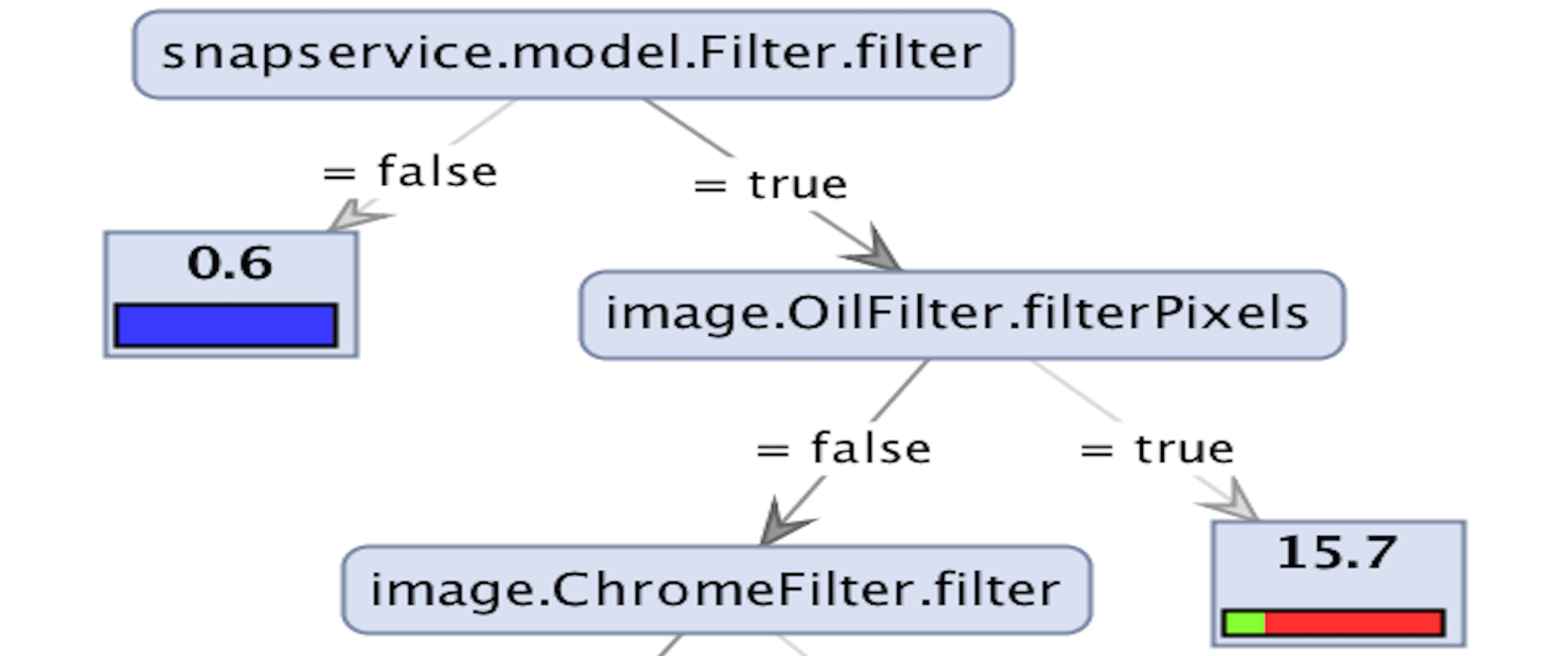}
}
\\
\subfloat[Time (s) of each trace]{\label{fig:gabfeed-scatter}
\includegraphics[width=0.49\linewidth]{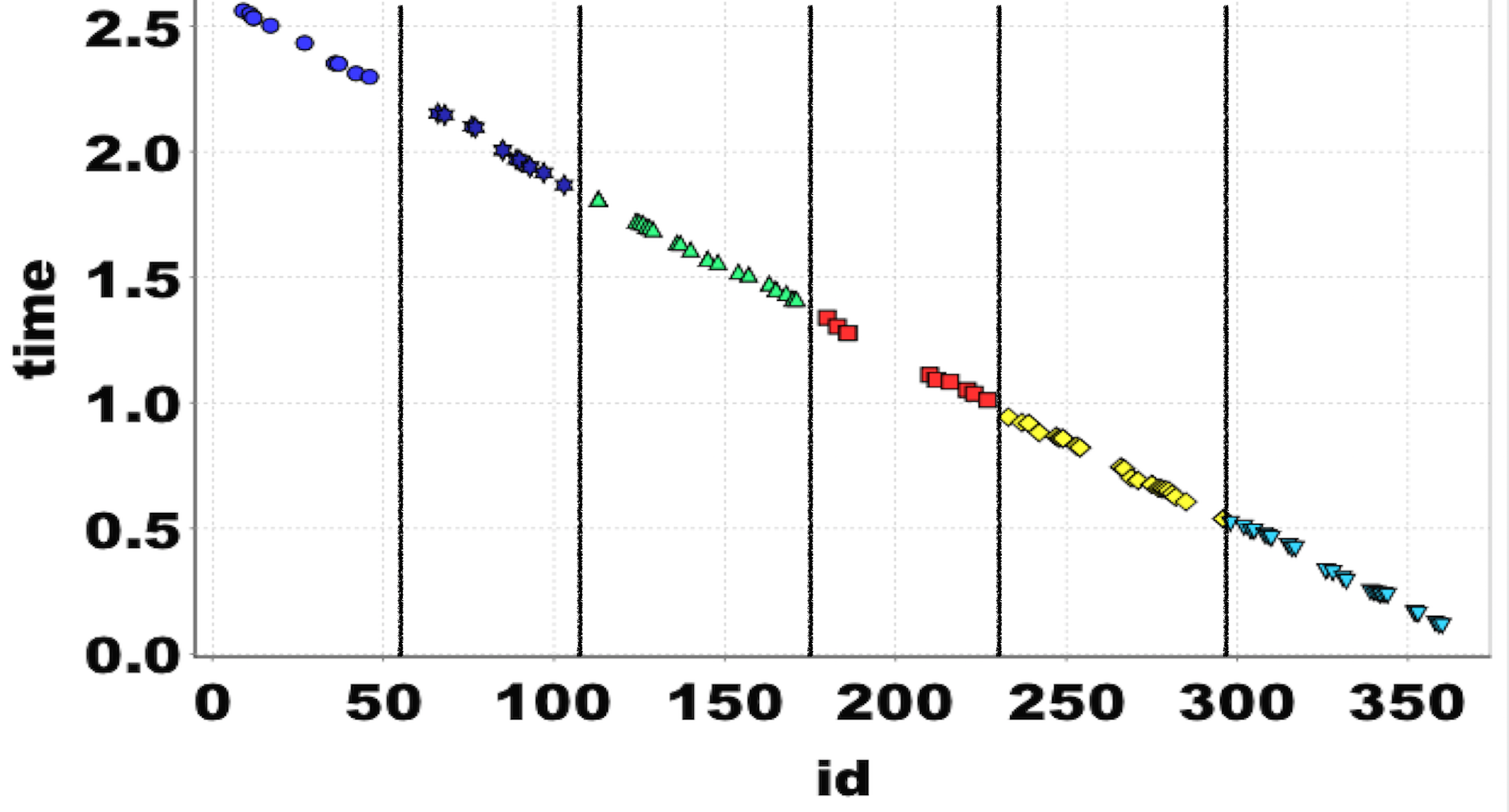}
}
&
\subfloat[Decision tree accuracy: 97.6$\%$ ]{\label{fig:gabfeed-tree}
\includegraphics[width=0.49\linewidth]{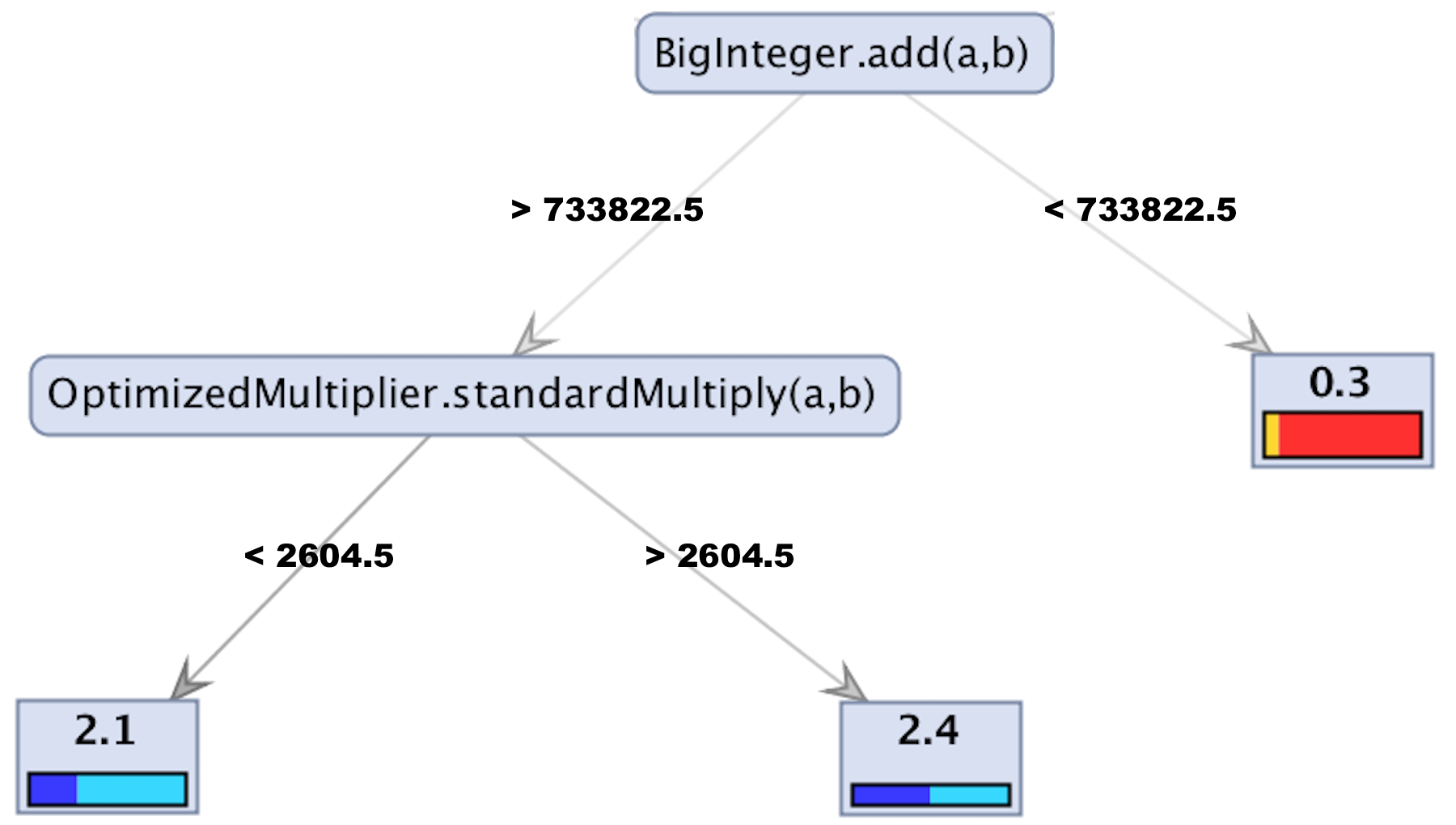}
}
\\
\subfloat[Time (s) of each trace]{\label{fig:textcrunchr-scatter}
\includegraphics[width=0.49\linewidth]{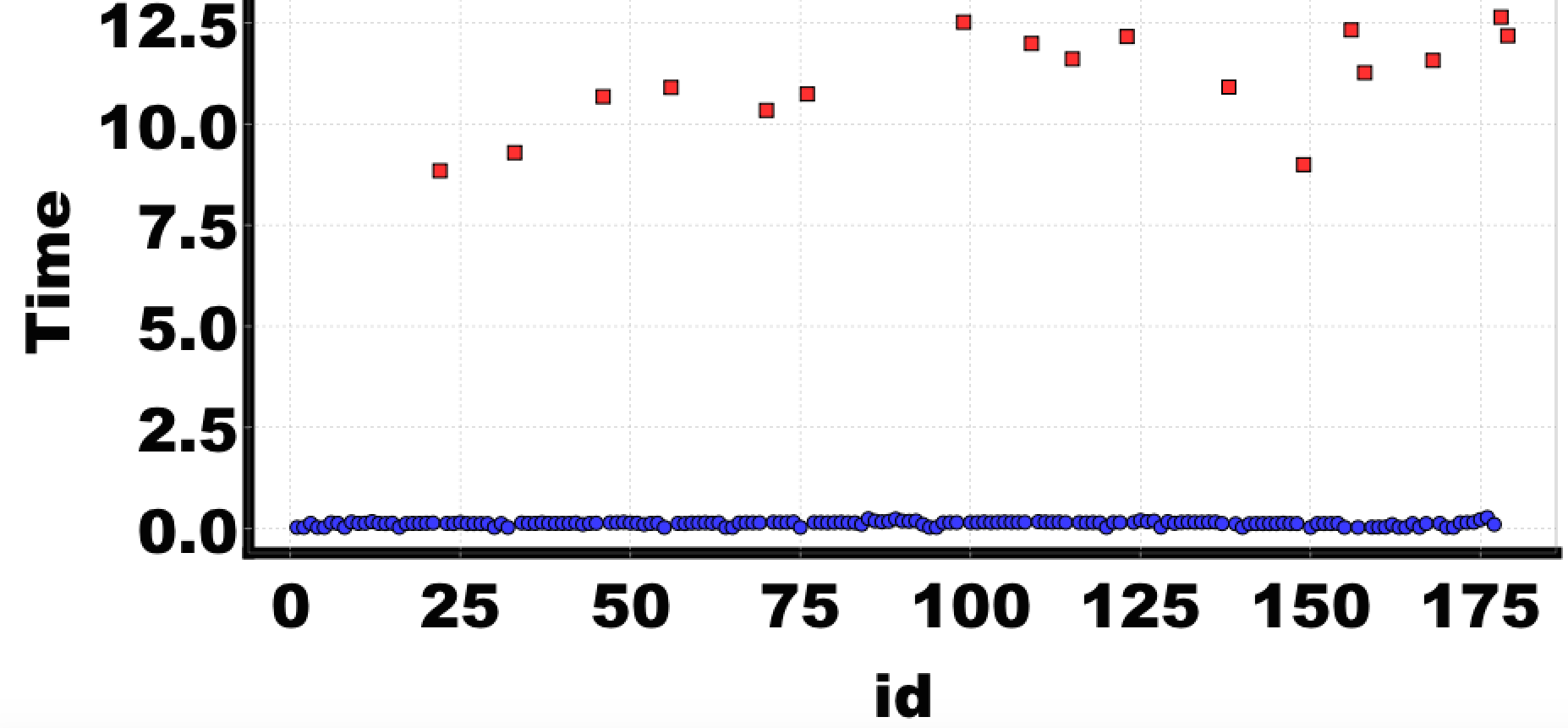}
}
&
\subfloat[Decision tree accuracy: 99.1$\%$ ]{\label{fig:textcrunchr-tree}
\includegraphics[width=0.45\linewidth]{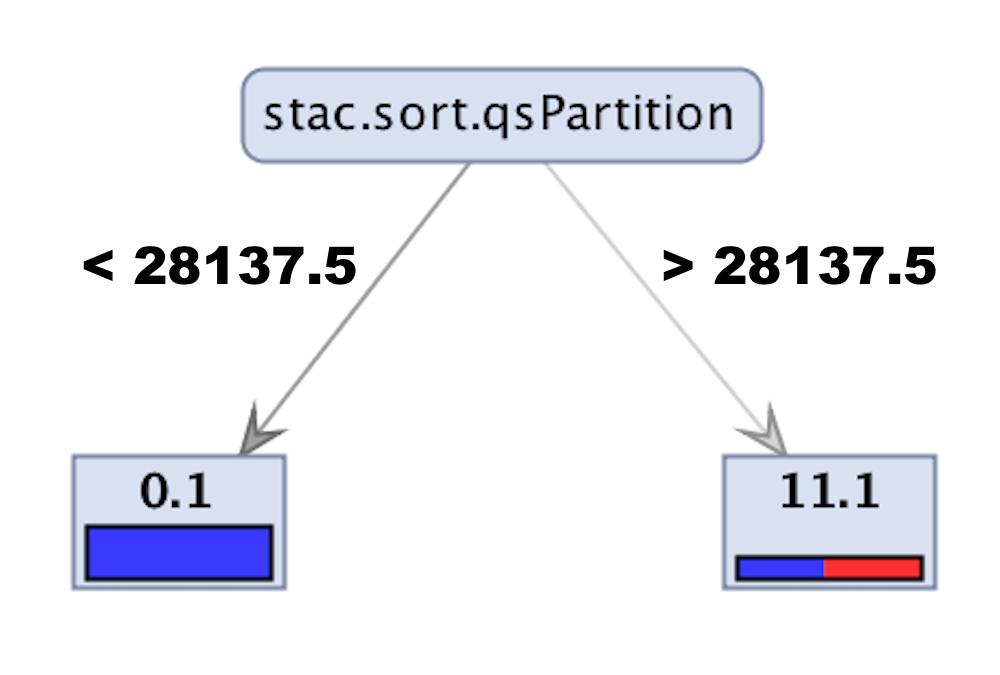}
}
\\
\end{tabular*}
\caption{Clustering in the time domain (a)-(c)-(e) to learn decision tree
classification models (b)-(d)-(f). The upper row corresponds to SnapBuddy
traces, the middle row corresponds GabFeed traces, while the bottom row corresponds to
TextCrunchr traces.   
}
\vspace{-0.3em}
\label{fig:bigfig}
\end{figure}

\paragraph{Decision Trees Produced by \toolname.}
In Fig.~\ref{fig:bigfig}\subref{fig:snapbuddy-tree}--\subref{fig:gabfeed-tree}--\subref{fig:textcrunchr-tree},
we show the decision tree learned from the SnapBuddy, GabFeed, and TextCrunchr traces, respectively.
As a decision tree is interpreted by following a path from the root to a leaf where the leaf yields the class label and the conjunction of the internal nodes describes the discriminator, one can look for characteristics of discriminated trace sets by following different paths in the tree.
The class labels at leaves are annotated with the bucket's mean
time. For example, in \subref{fig:snapbuddy-tree}, the label 15.7
shows that the path to this label which calls
\texttt{image.OilFilter.filterPixels} takes 15.7 seconds to execute.  
The colors in bars in the leaves represent the actual labels of the training traces
that would be classified in this bucket according to the learned
discriminator. Multiple colors in the bars mean that a discriminator, while
not perfectly accurate on the training traces, is also able to tolerate
noise. The height of the bar gives an
indication of the number of training traces following this
discriminator. The scatter plots
in \subref{fig:snapbuddy-scatter}--\subref{fig:gabfeed-scatter}--\subref{fig:textcrunchr-scatter}
show the time of each trace, with the color indicating the
corresponding cluster. 

\paragraph{Findings for SnapBuddy.}
For SnapBuddy, the traces exercise downloading the public profile pages of
all user from a mock database. 
We have explained in Sec.~\ref{sec:overview} how clustering (in
Fig.~\ref{fig:bigfig}\subref{fig:snapbuddy-scatter}) helps to identify
a timing side-channel, and how the decision tree (in
Fig.~\ref{fig:snapbuddy-tree}) helps in debugging the vulnerability.  

\paragraph{Findings for GabFeed.}
{\em Inputs.} For GabFeed, the traces exercise the authentication web API by fixing the user
public key and by sampling uniformly from the server private key space (3064-bit
length keys). 
{\it Identifying a Timing Side-Channel with Clustering.}
Considering scatter plot of GabFeed in Fig.~\ref{fig:gabfeed-scatter} (boundaries show different clusters),
we can see less definitive timing clusters.
However, it shows timing
differences that indicate a side channel. 
{\it Debugging Timing Side-Channels with Decision Tree Learning.}
The (part of) decision tree for GabFeed in Fig.~\ref{fig:gabfeed-tree}
is also less definitive than for SnapBuddy as we might expect given
the less well-defined execution time clusters. 
However, the part of
the decision tree
discriminants \texttt{OptimizedMultiplier.standardMultiply} for time
differences. Note that the attributes on the outgoing edge labels
correspond to a range for the number of times a particular method is
called. The decision tree explains that the different number of calls
for \texttt{OptimizedMultiplier.standardMultiply} leads to different
time buckets.  
\newsavebox{\SBoxModPow}
\begin{lrbox}{\SBoxModPow}\footnotesize
\begin{lstlisting}[language=Java,mathescape]
BigInt modPow(BigInt base, BigInt exp, BigInteger mod) { $\ldots$
  for (; i < width; i++) { $\ldots$
    if (exp.testBit(width - i - 1)) {
      s = OptimizedMultiplier.fastMultiply(s, base).mod(mod);
} $\ldots$ } $\ldots$ }
\end{lstlisting}
\end{lrbox}
By going back to the source code, we observed that
\texttt{standardMultiply} is called for each 1-bit in the server's
private key. The method {\tt standardMultiply} is called from a
modular exponentiation method called during authentication. 
What leaks is thus the number of 1s in the private key. 
A potential fix could be to rewrite the modular exponentiation method to
pad the timing differences.

\paragraph{Findings for TextCrunchr.}
{\it Inputs.}
For TextCrunchr, we provided four types of text inputs to 
analyze timing behaviors: sorted, reverse-sorted, randomly generated,
and reversed-shuffled arrays of characters (reverse-shuffle is an
operation that undoes a shuffle that TextCrunchr performs
internally). It is the reverse shuffled inputs that lead to high
execution time. 
Although the input provided to \toolname for analyzing TextCrunchr include
carefully crafted inputs (reversed shuffled sorted array), it can be argued that
a system administrator interested in auditing a security of a server has access
to a log of previous inputs including some that resulted in high execution time.
{\it Identifying Availability Vulnerabilities with Clustering.}
Considering scatter plot of TextCrunchr in
Fig.~\ref{fig:textcrunchr-scatter} 
we can see well-defined timing clusters which can potentially lead to
security issues. It shows that a small fraction of inputs
takes comparably higher time of execution in comparison to the
others. Thus an attacker can execute a denial-of-service
(availability) attack by repeatedly providing the costly inputs (for
some inputs, it will take more than 600 seconds to process the text). 
The system administrator mentioned above probably knew from his logs about possible
inputs with high execution time. What he did not know is why these
inputs lead to high execution time. 
{\it Debugging Availability Vulnerabilities with Decision Tree Learning.}
The decision tree for TextCrunchr in Fig.~\ref{fig:textcrunchr-tree}
shows that 
the number of calls on \texttt{stac.sort.qsPartition} as the explanation for
time differences (out of 327 existing methods in the application). 
This can help identify the sorting algorithm (Quicksort) used as a source of the
problem and leads to the realization that certain inputs trigger the
worst-case execution time of Quicksort. 

\paragraph{Threats to Validity.}
These case studies provide evidence that decision tree learning helps 
in identifying code fragments that correlate with differential execution time. 
Clearly, the most significant threat to validity is whether these 
programs are representative of other applications. 
To mitigate, we considered programs not created by
us nor known to us prior to this study. 
These applications were 
designed to faithfully represent real-world Java programs---for
example, using Java software engineering patterns and best
practices. 
Another threat concerns the representativeness of the training
sets. To mitigate this threat, we created sample traces directly using the web
interface for the whole application, rather than interposing at any intermediate
layer. 
This interface is for any user of these web applications and specifically the
interface available to a potential attacker. 
A training set focuses on exercising a particular feature of the application,
which also corresponds to the ability of an attacker to build training sets
specific to different features of the application.

\section{Related Work}
\label{sec:related}
Machine learning techniques have been used for {\em specification mining}, that
is, for learning succinct representations of the set of all program traces. 
Furthermore, machine learning techniques have been applied to learn classifiers of
programs for {\em malware detection} and for 
{\em software bug detection}. 

\paragraph{Specification Mining.}
In~\cite{AMM02}, machine learning
techniques are used to synthesize an NFA (nondeterministic finite
automaton) that represents all the correct traces of a
program. In our setting, this would correspond to learning a
discriminant for one cluster (of correct traces). In contrast, our decision trees
discriminate multiple clusters. However, the discriminants we
considered in this paper are less expressive than NFAs. The survey~\cite{Zeller11}
provides an overview of other specification mining approaches. 

\paragraph{Malware and Bug Detection.}
In malware detection, machine learning techniques are used to learn
classifiers that classify programs into benign and
malicious~\cite{RIE08,BAI07,BUR11,AAF13,WU12,KOL09,FRE10}.
In software bug detection, the task is to learn
classifiers that classify programs behaviors into faulty and
non-faulty~\cite{SUN10,LO09,WEI05,ELI08}. In contrast, we consider
more clusters of traces. 
In particular, \citet{LO09} constructs a classifier to generalize known
failures of software systems and to further detect (predict) other
unknown failures. First, it mines iterative patterns from program
traces of known normal and failing executions. Second, it applies a
feature selection method to identify highly
discriminative patterns which distinguish failing traces from
normal ones.

\noindent In all these works, the training set is labeled: all the programs are
labeled either benign or malicious (faulty or non-faulty). In
contrast, we start with an unlabeled set of traces, and 
construct their labels by clustering in the time domain.

\section{Conclusion}
\label{sec:concl}

\paragraph{Summary.}
We introduced the trace set discrimination problem as a formalization
of the practical problem of finding what can be inferred from limited run
time observations of the system. We have shown that the problem is
{\sc NP}-hard, and have proposed two scalable techniques to solve it. The
first is ILP-based, and it can give formal guarantees about the
discriminant that was found but infers discriminants of a limited
form. The second is based on decision trees, infers general
discriminants, but does not give formal guarantees. 
For three realistic applications, our tool produces a
decision tree useful for explaining timing differences between executions.

\paragraph{Future Work.} There are several intriguing directions for
future research. First, we will
investigate the extension of our framework to reactive systems, by
generalizing our notion of execution time observations to sequences of timed
events. Second, we will build up
the network traffic monitoring ability of our tool, to make it usable
by security analysts for distributed architectures.

\bibliographystyle{plainnat}
\bibliography{papers}

\begin{thebibliography}{22}
\providecommand{\natexlab}[1]{#1}
\providecommand{\url}[1]{\texttt{#1}}
\expandafter\ifx\csname urlstyle\endcsname\relax
  \providecommand{\doi}[1]{doi: #1}\else
  \providecommand{\doi}{doi: \begingroup \urlstyle{rm}\Url}\fi

\bibitem[Aafer et~al.(2013)Aafer, Du, and Yin]{AAF13}
Yousra Aafer, Wenliang Du, and Heng Yin.
\newblock Droid{APIM}iner: Mining {API}-level features for robust malware
  detection in {A}ndroid.
\newblock In \emph{SPCN}, pages 86--103, 2013.

\bibitem[Akthar and Hahne(2012)]{AKT12}
Fareed Akthar and Caroline Hahne.
\newblock Rapidminer 5 operator reference.
\newblock \emph{Rapid-I GmbH}, 2012.

\bibitem[Ammons et~al.(2002)Ammons, Bod{\'{\i}}k, and Larus]{AMM02}
Glenn Ammons, Rastislav Bod{\'{\i}}k, and James~R. Larus.
\newblock Mining specifications.
\newblock In \emph{POPL}, pages 4--16, 2002.

\bibitem[Bailey et~al.(2007)Bailey, Oberheide, Andersen, Mao, Jahanian, and
  Nazario]{BAI07}
Michael Bailey, Jon Oberheide, Jon Andersen, Z~Morley Mao, Farnam Jahanian, and
  Jose Nazario.
\newblock Automated classification and analysis of internet malware.
\newblock In \emph{RAID}, pages 178--197, 2007.

\bibitem[Breiman et~al.(1984{\natexlab{a}})Breiman, Friedman, Olshen, , and
  Stone]{BFOS84}
L.~Breiman, J.~Friedman, R.~Olshen, , and C.~Stone.
\newblock \emph{Classification and Regression Trees}.
\newblock Wadsworth, Belmont, CA, 1984{\natexlab{a}}.

\bibitem[Breiman et~al.(1984{\natexlab{b}})Breiman, Friedman, Olshen, and
  Stone]{Breiman/1984/CART}
L.~Breiman, J.H. Friedman, R.A. Olshen, and C.I. Stone.
\newblock \emph{Classification and regression trees}.
\newblock Wadsworth: Belmont, CA, 1984{\natexlab{b}}.

\bibitem[Burguera et~al.(2011)Burguera, Zurutuza, and Nadjm-Tehrani]{BUR11}
Iker Burguera, Urko Zurutuza, and Simin Nadjm-Tehrani.
\newblock Crowdroid: behavior-based malware detection system for {A}ndroid.
\newblock In \emph{Workshop on Security and privacy in smartphones and mobile
  devices}, pages 15--26, 2011.

\bibitem[Domingos(1999)]{Domingos1999}
Pedro Domingos.
\newblock The role of {O}ccam's razor in knowledge discovery.
\newblock \emph{Data Mining and Knowledge Discovery}, 3\penalty0 (4):\penalty0
  409--425, 1999.
\newblock ISSN 1573-756X.

\bibitem[Elish and Elish(2008)]{ELI08}
Karim~O Elish and Mahmoud~O Elish.
\newblock Predicting defect-prone software modules using support vector
  machines.
\newblock \emph{Journal of Systems and Software}, 81\penalty0 (5):\penalty0
  649--660, 2008.

\bibitem[Fredrikson et~al.(2010)Fredrikson, Jha, Christodorescu, Sailer, and
  Yan]{FRE10}
Matt Fredrikson, Somesh Jha, Mihai Christodorescu, Reiner Sailer, and Xifeng
  Yan.
\newblock Near-optimal malware specifications from suspicious behaviors.
\newblock In \emph{Security and Privacy (SP)}, pages 45--60, 2010.

\bibitem[Hyafil and Rivest(1976)]{hyafil1976constructing}
Laurent Hyafil and Ronald~L Rivest.
\newblock Constructing optimal binary decision trees is np-complete.
\newblock \emph{Information Processing Letters}, 5\penalty0 (1):\penalty0
  15--17, 1976.

\bibitem[Kass(1980)]{Kass80}
G.~V. Kass.
\newblock An exploratory technique for investigating large quantities of
  categorical data.
\newblock \emph{Journal of the Royal Statistical Society. Series C (Applied
  Statistics)}, 29\penalty0 (2):\penalty0 119--127, 1980.

\bibitem[Kolbitsch et~al.(2009)Kolbitsch, Comparetti, Kruegel, Kirda, Zhou, and
  Wang]{KOL09}
Clemens Kolbitsch, Paolo~Milani Comparetti, Christopher Kruegel, Engin Kirda,
  Xiao-yong Zhou, and XiaoFeng Wang.
\newblock Effective and efficient malware detection at the end host.
\newblock In \emph{USENIX {S}ecurity}, pages 351--366, 2009.

\bibitem[Lo et~al.(2009)Lo, Cheng, Han, Khoo, and Sun]{LO09}
David Lo, Hong Cheng, Jiawei Han, Siau-Cheng Khoo, and Chengnian Sun.
\newblock Classification of software behaviors for failure detection: a
  discriminative pattern mining approach.
\newblock In \emph{SIGKDD}, pages 557--566, 2009.

\bibitem[Mohri et~al.(2012)Mohri, Rostamizadeh, and Talwalkar]{MRT12}
Mehryar Mohri, Afshin Rostamizadeh, and Ameet Talwalkar.
\newblock \emph{Foundations of Machine Learning}.
\newblock The MIT Press, 2012.
\newblock ISBN 026201825X, 9780262018258.

\bibitem[Quinlan(1986)]{Quinlan/1986/ID3}
J.~Ross Quinlan.
\newblock Induction of decision trees.
\newblock \emph{Machine Learning}, 1:\penalty0 81--106, 1986.

\bibitem[Rieck et~al.(2008)Rieck, Holz, Willems, D{\"u}ssel, and Laskov]{RIE08}
Konrad Rieck, Thorsten Holz, Carsten Willems, Patrick D{\"u}ssel, and Pavel
  Laskov.
\newblock Learning and classification of malware behavior.
\newblock In \emph{Detection of Intrusions and Malware, and Vulnerability
  Assessment}, pages 108--125. Springer, 2008.

\bibitem[Sun et~al.(2010)Sun, Lo, Wang, Jiang, and Khoo]{SUN10}
Chengnian Sun, David Lo, Xiaoyin Wang, Jing Jiang, and Siau-Cheng Khoo.
\newblock A discriminative model approach for accurate duplicate bug report
  retrieval.
\newblock In \emph{ICSE}, pages 45--54, 2010.

\bibitem[Tan et~al.(2006)Tan, Steinbach, Kumar, et~al.]{TAN06}
Pang-Ning Tan, Michael Steinbach, Vipin Kumar, et~al.
\newblock \emph{Introduction to data mining}, volume~1.
\newblock Pearson Addison Wesley Boston, 2006.

\bibitem[Weimer and Necula(2005)]{WEI05}
Westley Weimer and George~C Necula.
\newblock Mining temporal specifications for error detection.
\newblock In \emph{TACAS}, pages 461--476, 2005.

\bibitem[Wu et~al.(2012)Wu, Mao, Wei, Lee, and Wu]{WU12}
Dong-Jie Wu, Ching-Hao Mao, Te-En Wei, Hahn-Ming Lee, and Kuo-Ping Wu.
\newblock Droidmat: {A}ndroid malware detection through manifest and api calls
  tracing.
\newblock In \emph{JCIS}, pages 62--69, 2012.

\bibitem[Zeller(2011)]{Zeller11}
Andreas Zeller.
\newblock Specifications for free.
\newblock In \emph{NFM}, pages 2--12, 2011.

\end{thebibliography}
\end{document}